\documentclass[aps,showpacs,pra,twocolumn,superscriptaddress]{revtex4-2}

\usepackage{exscale,amsfonts,mathrsfs}
\usepackage{bbm}
\usepackage{graphicx}
\usepackage{latexsym}
\usepackage{times}
\usepackage[T1]{fontenc}
\usepackage{enumerate}
\usepackage{bbold}
\usepackage{color}
\usepackage{mathtools}
\usepackage[normalem]{ulem}
\usepackage{amsfonts,amsmath,amssymb,amsthm,dsfont}
\usepackage[colorlinks=true,citecolor=blue,urlcolor=blue]{hyperref}
\usepackage{natbib}
\usepackage{physics}
\usepackage{changes}
\usepackage{titlesec}
\usepackage{palatino}
\usepackage{mathpazo}
\usepackage{subfigure}
\usepackage{mathrsfs}
\usepackage{xcolor,longtable}

\usepackage{physics}

\usetikzlibrary{patterns}

\newcommand{\one}{\leavevmode\hbox{\small1\normalsize\kern-.33em1}}

\newcommand{\ment}{\ket{\phi_+}}
\newcommand{\h}{\mathcal{H}}

\newcommand{\Q}{\mathcal{Q}}
\newcommand{\Lc}{\mathcal{L}}

\newcommand{\C}{\mathbb{C}}
\newcommand{\I}{\mathbb{1}}
\newcommand{\ti}[1]{\tilde{#1}}

\newcommand{\eg}{{\it{e.g.~}}}
\newcommand{\ie}{{{i.e.~}}}

\renewcommand{\iota}{\mathrm{i}}

\newtheorem{theorem}{Theorem}
\newtheorem{corollary}{Corollary}[theorem]
\newtheorem{lemma}[theorem]{Lemma}

\begin{document}
\title{Classifying the simplest Bell inequalities beyond qubits and their applications towards self-testing}
\author{Palash Pandya}
\email{ppandya@cft.edu.pl}
\affiliation{Center for Theoretical Physics, Polish Academy of Sciences, Aleja Lotnik\'{o}w 32/46, 02-668 Warsaw, Poland}

\author{Shubhayan Sarkar}
\affiliation{Institute of Informatics, Faculty of Mathematics, Physics and Informatics,
University of Gdansk, Wita Stwosza 57, 80-308 Gdansk, Poland}

\author{Remigiusz Augusiak}
\affiliation{Center for Theoretical Physics, Polish Academy of Sciences, Aleja Lotnik\'{o}w 32/46, 02-668 Warsaw, Poland}

\begin{abstract}
   Bell inequalities reveal the fundamentally nonlocal character of quantum mechanics. In this regard, one of the interesting problems is to explore all possible Bell inequalities that demonstrate a gap between local and nonlocal quantum behaviour. This is useful for the geometric characterisation of the set of nonlocal correlations achievable within quantum theory. Moreover, it provides a systematic way to construct Bell inequalities that are tailored to specific quantum information processing tasks. This characterisation is well understood in the simplest $(2,2,2)$ scenario, namely two parties performing two binary outcome measurements. However, beyond this setting, relatively few Bell inequalities are known, and the situation becomes particularly scarce in scenarios involving a greater number of outcomes. Here, we consider the $(2,2,3)$ scenario, or two parties performing two three-outcome measurements, and characterise all Bell inequalities that can arise from the simplest sum-of-squares decomposition and are maximally violated by the maximally entangled state of local dimension three. We then utilise them to self-test this state, along with a class of three-outcome measurements.
   \end{abstract}

\maketitle

Nonlocality and entanglement are the two most fundamental aspects of Quantum theory, and as such are the most studied topics. It is a well-known fact, first established by Bell \cite{bell-epr}, that the correlations generated by local measurements on an entangled system cannot always be reproduced by a classical theory. 
This non-reproducibility is a key ingredient in many applications bounding the length of the secret key in quantum key distribution, certifying the Hilbert space dimensions, randomness generation, secret sharing, etc. \cite{acin2007device,pironio2010random,brunner2008testing,secret-sharing}. Therefore, Bell inequalities that certify nonlocal correlations and distinguish them from classical are of utmost interest. 
Finding Bell inequalities for quantum systems and measurements of interest is in general a hard task, and apart from a few special cases, the Bell inequalities are not known. The task of finding multipartite Bell inequalities is especially difficult.
One of the most important implications of Bell inequality violation is certifying that the correlations generated by the measurement devices of the observers are nonlocal. This certification, aptly called device-independent (DI) certification, is the cornerstone of secure quantum key distribution \cite{acin2007device}. While device-independent certification reveals that the correlations produced are nonlocal, there is still an uncertainty in what is the actual quantum state shared between the observers and the measurements performed by them on their local subsystems. 

Though different sets of quantum systems and local measurements can result in the same nonlocal correlations and thus the same Bell inequality violation or 'Bell score', maximal violation of a Bell inequality can, in almost all cases, be attributed uniquely to a quantum state and a set of measurements. This is called the 'Self-testing' property and is a stricter form of device independence, where in addition to certifying nonlocal correlations, we also certify the underlying quantum state (and local measurements) \cite{mayers-yao}.
The simplest system that admits the property of self-testing is the maximally entangled state of two qubits with the now ubiquitous Clauser-Horne-Shimony-Holt (CHSH) inequality with two measurements and two outcomes per observer \cite{popescu1992states}. Since then, a complete characterisation of the Self-Tests for the maximally entangled state has been done \cite{robust-singlet,parallel-self-testing,complete-characterization-singlet}. In fact, in \cite{all-pure-states-self-test}, the authors proved that all entangled pure bipartite states can be self-tested. The methods developed for qubits were then employed for self-testing, in addition to multipartite qubit states, higher dimensional qudit systems as well, by describing the shared system locally as qubits \cite{yn13,all-pure-states-self-test,col18,supic18}.

In contrast to this, recent research has focused more on letting go of the local qubit formalism and developing the framework for self-testing in higher local dimensions by engaging the more complex operator machinery that goes with it, particularly, by expressing local measurements using the Weyl-Heisenberg formalism  \cite{Kaniewski2019maximalnonlocality,salavrakosBellInequalitiesTailored2017,sarkarCertificationIncompatibleMeasurements2022}. 
A powerful tool to derive algebraic constraints for the given Bell inequality to self-test the shared quantum state and the local measurements is the \emph{Sum-of-Squares} (SOS) decomposition  \cite{supicSelftestingQuantumSystems2020a}. By construction, the SOS decomposition of the Bell operator provides the quantum or Tsirelson bound of the Bell inequality. In \cite{custom-bell-inequalities}, this method is put forth formally as a way to design Bell inequalities tailored to the state. They compare the variational method and the sum-of-squares method for constructing Bell inequalities with the central idea of the state maximally violating the Bell inequality. The Tsirelson bound is then achieved by the state for which the Bell inequality was tailored, which is a desirable property. They point out that while the variational method provides the necessary conditions, the sum-of-squares method gives conditions on the Bell inequality that are sufficient for the state to violate it maximally.
In \cite{salavrakosBellInequalitiesTailored2017} the method is used to derive self-testing Bell inequalities, known as the SATWAP inequalities,  which are maximally violated by maximally entangled bipartite states and are defined for an arbitrary number of  measurements and outcomes per party. The observables considered were taken to be the optimal CGLMP measurements. In \cite{Kaniewski2019maximalnonlocality}, the authors address a similar question but restrict themselves to prime local Hilbert space dimension $d(\geq 3)$, and correspondingly, $d$ measurement settings and $d$ outcomes. Similarly to SATWAP inequalities, in \cite{custom-bell-inequalities} the two qutrit system is again self-tested using CGLMP measurements.

In a similar spirit, this article provides a very general analytic construction of a class of CGLMP-style Bell inequalities that self-test the bipartite maximally entangled qutrit state in the simplest setting of two three-outcome measurements. We will see that the SATWAP inequality with the number of measurement settings and outcomes fixed to 2 and 3, respectively, is a particular case of our construction. Although both share the same Tsirelson bound and a similar SOS decomposition, we do not make any assumptions about the observables used by both Alice and Bob.
The article is organised as follows. In \autoref{sec:prelim}, we give a gentle introduction to the formalism used in this article. In \autoref{sec:quantum-bound}, we define the general setting of this article, define the general Bell operator and provide a sum-of-squares decomposition. In \autoref{sec:self-test}, the Bell operator is further constrained to provide self-testing statements for the maximally entangled state. Next, in \autoref{sec:classical}, we provide an analysis of the classical value of the Bell operator and a comparison with the SATWAP inequalities.

\section{Preliminaries}
\label{sec:prelim}
Let us consider a simple \emph{Bell scenario} with two space-like separated, non-communicating observers, Alice and Bob, who share a quantum system and can perform local measurements on their devices. The scenario is characterised by the number of measurements, $m$, and the number of outcomes, $d$. To use the prevalent notation, the measurements of Alice and Bob are denoted as $A_x$ and $B_y$ where $x,y\in\{1,2,\dots, m\}$ and the outcomes are labelled as $a$ and $b$, respectively, for Alice and Bob, where $a,b\in\{0,1,\dots,d-1\}$.
In each run of the scenario, a quantum state, $\rho_{AB}$ is prepared and shared with Alice and Bob, who in turn choose a measurement setting on their device and read the outcome of the measurement. Over sufficiently many runs of the experiment, the results can be compiled into a vector of joint probability distributions, or correlations, $\vec{p}=\{p(a,b|x,y)\}$, where $p(a,b|x,y)$ is the probability of obtaining the outcomes $a$ and $b$ given that Alice and Bob perform the measurements $A_x$ and $B_y$, and is defined as,
\begin{equation}
    p(a,b|x,y) = Tr[\rho_{AB} \left(A_x\otimes B_y\right)]\,.
\end{equation}

The set of $\vec{p}$ obtained in the above manner is convex and is known as the set of quantum correlations, $\Q$. 
In such a scenario, it is not necessary that Alice and Bob's outcomes are statistically independent. In other words, the joint probability doesn't always factorise, \ie, $p(a,b|x,y) = p(a|x)p(b|y)$. The correlations such that,
\begin{equation}
    p(a,b|x,y) = p(a|x)p(b|y) \,, 
\end{equation}
where $p(a|x),p(b|y)\in\{0,1\}$ for all $x,y$ are called  \emph{deterministic} correlations. The set of correlations $\vec{p}$ that can be written as a convex sum of {deterministic} correlations is called the \emph{local} polytope, $\Lc$, and is a subset of $\Q$. 

A natural question that arises from this characterisation is whether we can determine if a given correlation $\vec{p}$ belongs to $\Q$ or $\Lc$. The way to show this is using Bell inequalities. Bell inequalities, in general, are formulated as linear combinations of the joint probabilities in $\vec{p}$,
\begin{equation}
\label{eq:bell-ineq-def}
    I(\vec{p}) = \sum_{abxy} c^{xy}_{ab} p(a,b|x,y) \leq b_C\,.
\end{equation}
where $c^{xy}_{ab}$ are the parameters of the Bell functional, and $b_C$ is called the classical bound of the Bell Inequality. It is defined as 
\begin{equation}
    b_C = \max_{\vec{p}\in\Lc}  I(\vec{p})\,.
\end{equation}
Similar to the classical bound, we can also define the quantum bound or the Tsirelson bound, $T_Q$ for the Bell inequality,
\begin{equation}
    T_Q = \max_{\vec{p}\in\Q}  I(\vec{p})\,.
\end{equation}
Using the Bell inequality now we can determine the membership of a given correlation $\vec{p}$ in the following manner. If it satisfies \eqref{eq:bell-ineq-def} it belongs in the set $\Lc$, and if $ b_C<I(\vec{p})\leq T_Q$, it belongs in the quantum set, $\Q$.

A more convenient approach is using the generalised correlators defined using the Fourier transform of the joint probabilities \cite{gen-correlators,Kaniewski2019maximalnonlocality},
\begin{equation}
    \ev{A^{(k)}_x\otimes B^{(l)}_y} = \sum_{a,b=0}^{d-1} \omega^{ak+bl} p(a,b|x,y)\,,
\end{equation}
where the $d$ measurement operators for Alice and Bob are defined as,
\begin{align}
    A^{(k)}_x = \sum_{i=0}^{d-1} \omega^{ik} P_i^A \qand B^{(l)}_y = \sum_{j=0}^{d-1} \omega^{jl} P_j^B\,,
\end{align}
where $\omega=\exp{2\pi\iota/d}$ is the $d^{\mbox{th}}$-root of unity, and $P_i^X$ are positive semi-definite operators such that $\sum_{i=0}^{d-1} P_i^X = \I$. In the case when $P_i^X$ are projective, the operators $A^{(k)}_x$ and $B{(l)}_y$ can be described by the unitary observables $A_x$ and $B_y $ such that,
\begin{equation}
    A^{(k)}_x = (A_x)^k \qand B^{(l)}_y = (B_y)^l \,.
\end{equation}
As a consequence, the spectrum of the observables is $\{1,\omega,\omega^2 \cdots \omega^{d-1}\}$. It is also evident from their construction that,
\begin{equation}
    (A_x^k)^\dagger = A_x^{d-k} \qand (B_y^k)^\dagger = B_y^{d-k}\,.
\end{equation}

In this correlation picture the Bell inequality can be expressed as
\begin{equation}
    I = \sum_{x,y=1}^{m}\sum_{k=0}^{d-1} c_{xyk}\ev{A^{k}_x\otimes B^{k}_y}\,,
\end{equation}
where the Bell scenario is defined by $m$ $d$-outcome measurements per observer.

\section{The Bell Operator and the Quantum Bound}
\label{sec:quantum-bound}
In this article, we consider the most general Bell operator without considering local terms. We want to set some expectations from our Bell operator that it must have nice properties, namely, a Sum-of-Squares decomposition and that the maximally entangled state achieves the Tsirelson bound. To that end, we define the Bell scenario as two parties performing two three-outcome measurements on their shared quantum state, so that $m=2$ and $d=3$.
We define our general Bell operator as
\begin{multline}\label{eq:BI}
    W=A_0\otimes \left(\alpha B_0+\beta B_1\right)+A_1\otimes \left(\gamma B_0+\delta B_1\right)\\
    +A_0^{\dagger}\otimes \left(\alpha B_0+\beta B_1\right)^{\dagger}+A_1^{\dagger}\otimes \left(\gamma B_0+\delta B_1\right)^{\dagger},
\end{multline}
where $A_0,A_1,B_0$ and $B_1$ are Alice and Bob's observables, and  $\alpha,\beta,\gamma,\delta\in\mathbb{C}$. We will reference them as Bell parameters going forward. Alice and Bob's observables are  $A_0,A_1,B_0$ and $B_1$. Following the discussion in the last section, the observables satisfy the following relations:
\begin{align}
 &A_x^3=B_y^3=\I, \nonumber\\
 &A_xA_x^\dagger = B_xB_x^\dagger =\I, \nonumber\\
\label{eq:operator-conditions}
 &A_x^2=A_x^\dagger \qand B_x^2=B_x^\dagger.
\end{align}

To begin the characterisation of the Bell operator, we first define the terms $L_i$,
\begin{align}
\label{eq:l1l2}
    L_1 &= \I - A_0 \otimes (\alpha B_0 + \beta B_1)\,,\\
    L_2 &= \I - A_1 \otimes (\gamma B_0 + \delta B_1)\,.
\end{align}
\begin{theorem}
\label{thm:theorem1}
If the Bell parameters satisfy $\alpha\beta^* + \gamma\delta^*=0$ and with $L_i$s defined above, then the Bell operator $W$ in \eqref{eq:BI} admits the following sum-of-squares decomposition,
\begin{multline}
    W = (2+\abs{\alpha}^2+ \abs{\beta}^2+ \abs{\gamma}^2+ \abs{\delta}^2)\I \\- \sum_{i=1}^2 \frac{1}{2}(L_i^\dagger L_i + L_i L_i^\dagger)\,,
\end{multline}
\end{theorem}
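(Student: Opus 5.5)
The proof is a direct algebraic expansion of the two squares. I will rewrite the right-hand side using only the unitarity relations in \eqref{eq:operator-conditions} and then show that the condition $\alpha\beta^*+\gamma\delta^*=0$ removes every term not present in $W$.

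To organise the computation I introduce the shorthands $X_1 = A_0\otimes(\alpha B_0+\beta B_1)$ and $X_2 = A_1\otimes(\gamma B_0+\delta B_1)$. Then $L_i=\I-X_i$ and, by \eqref{eq:BI}, $W = X_1+X_1^\dagger+X_2+X_2^\dagger$. For each $i$,
\begin{equation*}
\tfrac{1}{2}\left(L_i^\dagger L_i + L_iL_i^\dagger\right) = \I - X_i - X_i^\dagger + \tfrac{1}{2}\left(X_i^\dagger X_i + X_iX_i^\dagger\right).
\end{equation*}
Summing over $i=1,2$ therefore gives $2\I - W + \tfrac12\sum_i (X_i^\dagger X_i + X_iX_i^\dagger)$. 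The theorem is then equivalent to the identity
\begin{equation*}
\tfrac12\sum_{i=1}^2 \left(X_i^\dagger X_i + X_iX_i^\dagger\right) = \left(\abs{\alpha}^2+\abs{\beta}^2+\abs{\gamma}^2+\abs{\delta}^2\right)\I .
\end{equation*}

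To prove this identity, I first use $A_x^\dagger A_x = A_xA_x^\dagger=\I$ to remove Alice's operators entirely. This gives $X_1^\dagger X_1 = \I\otimes (\alpha B_0+\beta B_1)^\dagger(\alpha B_0+\beta B_1)$, and analogously for the other three products. Next I expand using $B_y^\dagger B_y = B_yB_y^\dagger=\I$. The diagonal parts contribute $(\abs{\alpha}^2+\abs{\beta}^2)\I$ from $i=1$ and $(\abs{\gamma}^2+\abs{\delta}^2)\I$ from $i=2$, after averaging the two orderings.

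The remaining cross terms are
\begin{equation*}
\tfrac12\left[(\alpha^*\beta+\gamma^*\delta)\left(B_0^\dagger B_1 + B_1B_0^\dagger\right) + (\alpha\beta^*+\gamma\delta^*)\left(B_1^\dagger B_0 + B_0B_1^\dagger\right)\right],
\end{equation*}
each tensored with $\I$ on Alice's side. The key point, and the only place the hypothesis enters, is that both $i=1$ and $i=2$ produce the \emph{same} operator combinations $B_0^\dagger B_1+B_1B_0^\dagger$ and $B_1^\dagger B_0+B_0B_1^\dagger$, since Alice's unitaries have dropped out. The coefficients therefore add. The hypothesis $\alpha\beta^*+\gamma\delta^*=0$, together with its complex conjugate $\alpha^*\beta+\gamma^*\delta=0$, kills both brackets. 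Substituting back yields $\sum_i \tfrac12(L_i^\dagger L_i+L_iL_i^\dagger) = (2+\abs{\alpha}^2+\abs{\beta}^2+\abs{\gamma}^2+\abs{\delta}^2)\I - W$, which rearranges to the claimed decomposition.

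I expect no real obstacle here. The only care needed is in tracking the ordering of $B_0,B_1$ in $X_iX_i^\dagger$ versus $X_i^\dagger X_i$, since $B_0$ and $B_1$ need not commute. This is exactly why the symmetrised combination $\tfrac12(L_i^\dagger L_i + L_iL_i^\dagger)$ is used: it makes the noncommuting cross terms from $i=1$ and $i=2$ appear with matching operator structure, so the scalar condition alone suffices.
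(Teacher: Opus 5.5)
Your proof is correct and follows essentially the same route as the paper's Appendix~A. You expand $\frac12(L_i^\dagger L_i+L_iL_i^\dagger)$ and use unitarity of $A_x$ and $B_y$ to reduce the quadratic terms to $\I\otimes(\ldots)$. The cross terms then collect into $(\alpha^*\beta+\gamma^*\delta)(B_0^\dagger B_1+B_1B_0^\dagger)+(\alpha\beta^*+\gamma\delta^*)(B_1^\dagger B_0+B_0B_1^\dagger)$, which vanishes under the hypothesis.

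One small aside: your closing remark overstates the role of symmetrisation. The sum $\sum_i L_i^\dagger L_i$ alone, or $\sum_i L_iL_i^\dagger$ alone, already produces matching cross terms that cancel under the same condition. The symmetric form matters later, because maximal violation then forces both $L_i\ket\psi=0$ and $L_i^\dagger\ket\psi=0$.
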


\begin{proof}
Here, we give a sketch of the proof, while the details can be found in \autoref{app:theorem1}. Expanding the sum-of-squares expression, we find
\begin{multline}
\label{eq:sos}
\frac{1}{2}\left(L_1^\dagger L_1 + L_2^\dagger L_2 + L_1 L_1^\dagger  + L_2 L_2^\dagger \right) \\= 2 \I - W +  X_1 + X_2,
\end{multline}
where $X_1$ and $X_2$ are the cross terms:
\begin{multline}
    X_1 = \I \otimes (\alpha B_0 + \beta B_1)^\dagger (\alpha B_0 + \beta B_1)\\+
    \I \otimes (\gamma B_0 + \delta B_1)^\dagger  (\gamma B_0 + \delta B_1),\,\mbox{and}
\end{multline}
\begin{multline}
    X_2 = \I \otimes (\alpha B_0 + \beta B_1)^\dagger (\alpha B_0 + \beta B_1)\\+
    \I \otimes (\gamma B_0 + \delta B_1)^\dagger  (\gamma B_0 + \delta B_1).
\end{multline}
The sum of cross terms is reduced to
\begin{equation}
    X_1+X_2 = (\abs{\alpha}^2+ \abs{\beta}^2+ \abs{\gamma}^2+ \abs{\delta}^2)) \I ,
\end{equation}
when we set
\begin{equation}
\label{eq:ab-gd-sum-zero}
    \alpha\beta^* + \gamma\delta^* = \alpha^*\beta + \gamma^*\delta=0.
\end{equation}
From this, we can rewrite \eqref{eq:sos}:
\begin{multline}
\frac{1}{2}\sum_{i=1}^2\left(L_i^\dagger L_i +  L_i L_i^\dagger \right) \\= (2 + \abs{\alpha}^2+ \abs{\beta}^2+ \abs{\gamma}^2+ \abs{\delta}^2) \I - W.
\end{multline}
Rearranging the above equation gives us the requisite sum-of-squares decomposition and completes the proof of the theorem.
Note, $T_Q = (2 + \abs{\alpha}^2+ \abs{\beta}^2+ \abs{\gamma}^2+ \abs{\delta}^2)$ is the Tsirelson bound.
\end{proof}

\begin{corollary}
It follows from the above theorem, that
\begin{enumerate}
    \item $\abs{\alpha}^2+ \abs{\beta}^2+ \abs{\gamma}^2+ \abs{\delta}^2 = 2$, and
    \item the Tsirelson bound or the maximum quantum value of $W$ is $T_Q=4$.
\end{enumerate}
\end{corollary}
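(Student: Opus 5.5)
The plan is to read the corollary as the statement of what Theorem~\ref{thm:theorem1} forces once the sum-of-squares bound is required to be \emph{tight}. By this I mean attained by some state, in particular by the maximally entangled state $\ket{\phi_+}$ of two qutrits, which is the design goal stated at the start of this section. The inequality $\ev{W}\leq 2+\abs{\alpha}^2+\abs{\beta}^2+\abs{\gamma}^2+\abs{\delta}^2$ is immediate from the decomposition, because each $\frac12(L_i^\dagger L_i+L_iL_i^\dagger)$ is positive semidefinite. Equality for a state $\ket{\psi}$ therefore holds iff $L_i\ket{\psi}=0$ and $L_i^\dagger\ket{\psi}=0$ for $i=1,2$. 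So the first step is to write these saturation conditions explicitly. Set $M_1=\alpha B_0+\beta B_1$ and $M_2=\gamma B_0+\delta B_1$. The conditions then read
\begin{equation*}
(A_0\otimes M_1)\ket{\psi}=\ket{\psi},\qquad (A_1\otimes M_2)\ket{\psi}=\ket{\psi}.
\end{equation*}

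The second step is to take norms of these equations. Because $A_x$ is unitary by \eqref{eq:operator-conditions}, we get $\bra{\psi}\I\otimes M_1^\dagger M_1\ket{\psi}=1$ and likewise for $M_2$. Adding the two gives
\begin{equation*}
\bra{\psi}\I\otimes\left(M_1^\dagger M_1+M_2^\dagger M_2\right)\ket{\psi}=2.
\end{equation*}
The operator in the middle is exactly the one computed in the proof of Theorem~\ref{thm:theorem1}. There the cross terms $B_0^\dagger B_1$ and $B_1^\dagger B_0$ appear with coefficients $\alpha^*\beta+\gamma^*\delta$ and $\alpha\beta^*+\gamma\delta^*$, and the hypothesis \eqref{eq:ab-gd-sum-zero} kills both. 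The diagonal terms collapse to scalars since $B_y^\dagger B_y=\I$. The left-hand side is therefore $\abs{\alpha}^2+\abs{\beta}^2+\abs{\gamma}^2+\abs{\delta}^2$ for \emph{any} normalised $\ket{\psi}$, which proves item~1.

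Item~2 then follows by substituting this value into $T_Q=2+\abs{\alpha}^2+\abs{\beta}^2+\abs{\gamma}^2+\abs{\delta}^2$, giving $T_Q=4$.

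The main obstacle is logical rather than computational. As literally stated, the Theorem only gives an upper bound, and without the tightness requirement the Bell parameters could be rescaled freely. I would therefore state the standing assumption explicitly: the SOS bound is the true maximal quantum value, i.e.\ it is attained. Under that assumption the argument above is fully general.

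To show the assumption is consistent, I would check that $\ket{\phi_+}$ can satisfy the saturation conditions. Using $(X\otimes Y)\ket{\phi_+}=(XY^T\otimes\I)\ket{\phi_+}$, the condition $L_1\ket{\phi_+}=0$ becomes $A_0M_1^T=\I$. This says $M_1$ must itself be unitary, with $A_0=(M_1^T)^{-1}$. Taking the normalised trace of $M_1^\dagger M_1=\I$ independently reproduces $\abs{\alpha}^2+\abs{\beta}^2=1$ up to the cross term, and the analogous statement holds for $M_2$. This serves as a cross-check of item~1. It also previews the constraints on the observables that will be needed for self-testing in \autoref{sec:self-test}.
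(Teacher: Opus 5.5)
Your proof is correct, and its skeleton is the same as the paper's. Saturation of the sum-of-squares bound gives the nullifier conditions. These force $C_0=\alpha B_0+\beta B_1$ and $C_1=\gamma B_0+\delta B_1$ to act unitarily. Adding the two resulting relations then lets $\alpha\beta^*+\gamma\delta^*=0$ remove the cross terms.

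The difference is in the middle step.
\begin{itemize}
\item The paper combines $L_1\ket\psi=0$ with $L_1^\dagger\ket\psi=0$ to get $(\I\otimes C_0^\dagger C_0)\ket\psi=\ket\psi$, along with $C_0^3=\I$ and $C_0^2=C_0^\dagger$. It then reads these as operator identities on Bob's space. That step is not justified in the paper and strictly holds only on the support of Bob's reduced state (harmless here, but unstated).
\item You instead take norms in $(A_0\otimes C_0)\ket\psi=\ket\psi$ and use unitarity of $A_0$ to get $\bra{\psi}\I\otimes C_0^\dagger C_0\ket{\psi}=1$. Since $C_0^\dagger C_0+C_1^\dagger C_1$ is already a multiple of $\I$ at the operator level, its expectation value does not depend on the state. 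So you need only $L_i\ket\psi=0$, and no support or rank considerations arise.
\end{itemize}
The paper's route buys the full list of operator relations for $C_i$, which it reuses in the proof of Theorem~\ref{thm:theorem2}. For the corollary itself, your argument is the cleaner one.

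You are also right to make explicit that the corollary presupposes the SOS bound is attained. The paper uses this silently when it equates maximal violation with the nullifier conditions.

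Your final ``consistency'' paragraph, however, only derives further necessary conditions on $C_i$. It does not show that the bound is actually attained for any admissible choice of parameters and observables; that would need an explicit instance, for example the $d=3$ SATWAP case. The conditional statement you prove does not require this.
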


\begin{proof}
For the proof of the corollary, notice that when a state $\ket\psi$ maximally violates the Bell inequality, the following relations hold:
\begin{equation}
\label{eq:nullifiers}
    L_i\ket\psi = L_i^\dagger\ket\psi = 0 \qfor  i=1,2.
\end{equation}
with $L_i$'s referred to as the nullifiers of the state. If we define $C_0 = \alpha B_0 + \beta B_1$ and $C_1 = \gamma B_0 + \delta B_1$, then using the above relations, it is easy to show that the operators $C_i$ are also unitary and projective, 
\begin{equation}
    \label{eq:C-operators}
    C_i^2 = C_i^\dagger\qand C_i^\dagger C_i =\I.
\end{equation}
Using the latter condition for both $i=0,1$ with \eqref{eq:operator-conditions} and \eqref{eq:ab-gd-sum-zero}, directly gives us
\begin{align}
\label{eq:sum-abs-vals}
    \abs{\alpha}^2+ \abs{\beta}^2+ \abs{\gamma}^2+ \abs{\delta}^2 = 2,
\end{align}
and therefore the Tsirelson bound is $T_Q = 4$. \end{proof}

\section{Self-testing statement for the Bell operator}
\label{sec:self-test}

We further characterise our Bell operator, $W$, by providing a self-testing statement for $W$, which is the main result of this article. 
The following theorem determines the state and the family of measurements that can be self-tested by our Bell operator.
First, let us define a few relevant symbols and matrices so that the mathematical expressions that follow are easier to digest. Most of the expressions that follow can be simplified if we use \eqref{eq:ab-gd-sum-zero} to define
\begin{equation}
    \kappa = \frac{\alpha\beta^*}{\alpha^*\beta} = \frac{\gamma\delta^*}{\gamma^*\delta}\,.
\end{equation}
The above is equivalent to
\begin{equation}
    \kappa ={e}^{\iota 2(\theta_\alpha-\theta_\beta)} = {e}^{\iota 2(\theta_\gamma-\theta_\delta)}\,,
\end{equation}
which constrains the phases as $\theta_\alpha-\theta_\beta = \theta_\gamma-\theta_\delta + n \pi$ for all $n\in\mathbb{Z}$.
We will also define the matrix $T_3$ in terms of the elements of the three dimensional Heisenberg-Weyl basis, $Z=\sum_{i=0}^2\omega^i\ket{i}\bra{i}$ and $X=\sum_{i=0}^2\ket{i}\bra{i+1}$, also known as the clock and shift matrices, with $\omega = e^{\frac{2\pi\iota}{3}}$ (the cube root of unity).
\begin{multline}
\label{eq:t3-def}
    T_3 = \frac{1}{3}\left[\left( \kappa^2+\frac{2}{\kappa} \right) Z + \left( \kappa^2-\frac{1}{\kappa} \right) ZX \right.\\
    \left.+ \omega\left( \kappa^2-\frac{1}{\kappa} \right) ZX^2\right].
\end{multline} 

\begin{theorem}
\label{thm:theorem2}
    If the state $\ket\psi \in \mathcal{H}_A\otimes \h_{B}$ achieves the Tsirelson bound $T_Q=4$ for the Bell operator $W$, with observables $A_x\in\h_A$ and $B_x\in\h_B$ and $x=0,1$, then the following three statements are simultaneously true:
    \begin{enumerate}
        \item There exist local unitaries $U_A:\h_A\rightarrow\mathbb{C}^3\otimes\h_{A'}$ and $U_B:\h_B\rightarrow\mathbb{C}^3\otimes\h_{B'}$, such that
        \begin{align}
        \label{eq:u_b}
            U_B B_0 U_B^\dagger = Z\otimes \I_{B'} \,,\\
            U_B B_1 U_B^\dagger = T_3\otimes \I_{B'}\,, 
        \end{align}
        and,
        \begin{align}
            U_A A_0 U_A^\dagger = \left(\alpha Z+\beta T_3\right)^*\otimes \I_{A'} \,,\\
            U_A A_1 U_A^\dagger = \left(\gamma Z + \delta T_3\right)^*\otimes \I_{A'}\,, 
        \end{align}
        where $Z$ and $T_3$ are defined in \eqref{eq:t3-def}.
        \item The action of the unitaries on the state is given by
        \begin{equation}
            U_A\otimes U_B \ket\psi = \ment\otimes\ket{\psi'}\,,      
        \end{equation}
        where $\ment$ is the maximally entangled state acting on $\C^3_A\otimes\C^3_B$ and $\ket{\psi'}$ is some state acting on $\h_{A'}\otimes\h_{B'}$.
        \item The Bell parameters are constrained as
        \begin{align}
        \label{eq:alpha-square-is-1}
            \frac{\alpha^2}{\alpha^*} + \frac{\beta^2}{\beta^*}  = \frac{\gamma^2}{\gamma^*} + \frac{\delta^2}{\delta^*} = 1.
        \end{align}
         and, 
        \begin{align}
        \label{eq:inequality}
            \left( \kappa^2-\frac{1}{\kappa} \right) \neq 0.
        \end{align}
        in addition to \eqref{eq:ab-gd-sum-zero} and \eqref{eq:sum-abs-vals}.
    \end{enumerate}
\end{theorem}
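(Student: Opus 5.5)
Starting from the nullifier relations \eqref{eq:nullifiers}, I will derive algebraic relations for Bob's operators on the support of the state, and then build the isometries explicitly. Write $C_0=\alpha B_0+\beta B_1$ and $C_1=\gamma B_0+\delta B_1$. Then $L_1\ket\psi=0$ and $L_2\ket\psi=0$ give $A_0\otimes C_0\ket\psi=\ket\psi$ and $A_1\otimes C_1\ket\psi=\ket\psi$. Since $A_x$ is unitary with $A_x^\dagger=A_x^2$, this yields $\I\otimes C_0\ket\psi=A_0^\dagger\otimes\I\ket\psi$ and similarly for $C_1$. Combining these with $L_i^\dagger\ket\psi=0$ and iterating, I expect to obtain $C_i^3\ket\psi=\ket\psi$, $C_i^\dagger C_i\ket\psi=\ket\psi$ and $C_i^2\ket\psi=C_i^\dagger\ket\psi$ (the relations \eqref{eq:C-operators}).

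As usual, I will first reduce to full-rank marginals by restricting to the supports of the reduced states. On those supports the relations hold as operator identities. Expanding $C_i^\dagger C_i=\I$ gives
\[
(|\alpha|^2+|\beta|^2)\I+\alpha^*\beta B_0^\dagger B_1+\alpha\beta^* B_1^\dagger B_0=\I,
\]
and an analogous identity for $(\gamma,\delta)$.

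The central algebraic step is to turn $C_i^2=C_i^\dagger$ and $C_i^3=\I$ into a fixed relation between $B_0$ and $B_1$. Expanding $C_0^2=C_0^\dagger$ gives
\[
\alpha^2B_0^2+\beta^2B_1^2+\alpha\beta\{B_0,B_1\}=\alpha^*B_0^\dagger+\beta^*B_1^\dagger,
\]
and since $B_y^2=B_y^\dagger$ this becomes
\[
(\alpha^2-\alpha^*)B_0^\dagger+(\beta^2-\beta^*)B_1^\dagger+\alpha\beta\{B_0,B_1\}=0 .
\]
Together with the unitarity relation, which fixes $B_0^\dagger B_1$ up to a phase $\kappa$, these should close into a finite algebra generated by $B_0$ and $B_1$. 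I will diagonalise $B_0=\sum_j\omega^jP_j$ and show that each block $P_jB_1P_k$ is a scalar multiple of a partial isometry, with scalars fixed by $\kappa$. This should give $\operatorname{rank}P_0=\operatorname{rank}P_1=\operatorname{rank}P_2$. I can then define $U_B$ by identifying $P_j\h_B\cong\ket{j}\otimes\h_{B'}$ through these partial isometries (a Jordan-type decomposition into $3$-dimensional blocks). In that frame $B_0=Z\otimes\I$, and matching the block coefficients yields $B_1=T_3\otimes\I$ with $T_3$ as in \eqref{eq:t3-def}.

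Consistency of these coefficients with the constraints should force \eqref{eq:alpha-square-is-1}. The condition $\kappa^2-1/\kappa\neq0$ should come out as the requirement that $B_1$ is not diagonal in the same basis as $B_0$. If it were, $B_0$ and $B_1$ would commute, and no maximal violation beyond the classical value is possible in that case.

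For Alice, the relation $A_0\otimes\I\ket\psi=\I\otimes C_0^\dagger\ket\psi$ together with full rank will let me transport the structure. I will choose $U_A$ by the standard swap-isometry argument, or equivalently via the transpose trick $(M\otimes\I)\ment=(\I\otimes M^T)\ment$. Writing $\ket\psi$ in the Schmidt form adapted to $U_B$, the relation $(\I\otimes B_0)\ket\psi=(A_0'\otimes\I)\ket\psi$-type equations, obtained by inverting $C_i$ linearly in terms of $B_0$ and $B_1$ (possible since $\alpha\delta-\beta\gamma\neq0$ follows from the orthogonality condition \eqref{eq:ab-gd-sum-zero} and \eqref{eq:sum-abs-vals}), will force the Schmidt coefficients within each block to be equal. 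The state is therefore $\ment\otimes\ket{\psi'}$. The formula $A_x=(\text{corresponding }C_x)^*$ then follows from $C_x^\dagger$ acting as the transpose on $\ment$, since $(M^\dagger)^T=M^*$.

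I expect the main obstacle to be the middle step: solving the operator equations for $B_1$ relative to the eigenbasis of $B_0$ without a dimension assumption, and showing that the block structure is rigid enough to pin down exactly $T_3$. If the direct block analysis becomes unwieldy, I would first show that $B_0^\dagger B_1$ and $B_1^\dagger B_0$ are determined up to the phase $\kappa$. From that I would derive a Weyl-type commutation relation $B_0\,(\cdot)=\omega(\cdot)\,B_0$ for a suitable unitary built from $B_1$, from which the standard clock-and-shift construction gives $U_B$.
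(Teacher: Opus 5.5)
Your outline is sound and follows the same architecture as the paper. The nullifiers give the $C_x$-relations on Bob's side. $B_1$ is then analysed blockwise against the eigenprojections $P_j$ of $B_0$, and a block-diagonal unitary built from the off-diagonal blocks brings $B_1$ to $T_3\otimes\I$. Alice is handled with the transpose trick.

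Two sub-arguments genuinely differ from the paper's.
\begin{itemize}
\item \textbf{Equal multiplicities of the eigenvalues of $B_0$.} The paper takes traces and obtains $(\kappa^3-1)^2\Tr(B_y^\dagger)=0$. You extract it from the blocks. With $F_{jk}=P_jB_1P_k$, the off-diagonal part of $C_0C_0^\dagger=\I$ gives $F_{kj}=-\kappa\,\omega^{j+k}F_{jk}^\dagger$. The diagonal blocks of $B_1B_1^\dagger=\I$ and $B_1^2=B_1^\dagger$ then give $F_{jk}F_{jk}^\dagger=sP_j$ and $F_{jk}^\dagger F_{jk}=sP_k$, with $s=\frac19(2-\kappa^3-\kappa^{-3})$. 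So for $\kappa^3\neq1$ the ranks agree, and the same partial isometries define $U_B$. In your route, \eqref{eq:alpha-square-is-1} also comes out in arbitrary dimension: compare the off-diagonal parts of $C_0^2=C_0^\dagger$ and $C_0C_0^\dagger=\I$ whenever some $F_{jk}\neq0$. The paper instead computes this relation in $\C^3$ and imports it into the general case.
\item \textbf{Alice and the state.} The paper works eigenvector by eigenvector of the reduced state on $\h_A\otimes\C^3$. It then uses genuine incompatibility (Burnside) of $\tilde C_0^*,\tilde C_1^*$ to force $P_B\propto\I$. Your version amounts to $[\rho_B,C_x]=0$, which follows from $L_x\ket\psi=L_x^\dagger\ket\psi=0$. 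Hence $\rho_B=\frac{\I}{3}\otimes\sigma_{B'}$, since the off-diagonal entries of $T_3$ are nonzero. Uniqueness of purifications and $(M^\dagger)^T=M^*$ then finish the argument. This is the more standard route: it needs only those nonzero entries rather than a Burnside-type lemma, and it treats the supports explicitly, which the paper does not.
\end{itemize}

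Three points need repair.
\begin{itemize}
\item \textbf{Excluding the commuting case.} Your appeal to ``no violation beyond the classical value'' requires $\beta_C<4$, and this is not automatic under the hypotheses. Take the matrix of Bell parameters to be a unitary sending $(1,\omega)$ to $(1,1)$; its entries can all be nonzero. Then the deterministic strategy $b_0=1$, $b_1=\omega$, $a_0=a_1=1$ already reaches $4$. Instead, take \eqref{eq:alpha-square-is-1} and \eqref{eq:inequality} as standing assumptions, as the paper effectively does. The diagonal blocks are then forced to be $\frac{\omega^j}{3}(\kappa^2+\frac{2}{\kappa})P_j$. That scalar has modulus $1$ only if $\kappa^3=1$, so $B_1$ cannot commute with $B_0$.
\item \textbf{Use $\alpha\delta-\beta\gamma\neq0$ on Bob's side too.} Invoke it before you decompose, so that $\operatorname{supp}\rho_B$ is invariant under $B_0$ and $B_1$ separately. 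Note that this condition needs both columns $(\alpha,\gamma)$ and $(\beta,\delta)$ nonzero; orthogonality and \eqref{eq:sum-abs-vals} alone do not give it.
\item \textbf{The block matching will not land on \eqref{eq:t3-def} literally.} The relation $F_{10}=-\kappa\omega F_{01}^\dagger$, with $(T_3)_{01}=\frac13(\kappa^2-\frac1\kappa)$, forces $(T_3)_{10}=\omega(T_3)_{01}$. This gives $T_3=\frac13[(\kappa^2+\frac2\kappa)Z+(\kappa^2-\frac1\kappa)(ZX+ZX^2)]$ up to a diagonal unitary, which is the paper's intermediate matrix at $n=0$. By contrast, \eqref{eq:t3-def} has $(T_3)_{10}=\omega^2(T_3)_{01}$. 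With that extra $\omega$ one gets $(T_3T_3^\dagger)_{01}=\frac13(1+\omega^2\kappa^3+\omega\kappa^{-3})$. This is nonzero for generic $\kappa$ (for example $\kappa^3=-1$), so the displayed $T_3$ is not unitary.
\end{itemize}
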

\begin{proof}
    Again, the proof is lengthy, and we will describe the main steps here, while the details are relegated to the \autoref{app:theorem2}. Without loss of generality, we can start by characterising the observables of Bob and then derive the form of Alice's observables.
    We shall first consider the case where the observables $B_0,B_1\in\C^3$. For such projective and unitary observables, it is always possible to bring $B_0 = Z$ using some unitary $U_B$ and then $B_1$ can be parameterised using the Heisenberg-Weyl basis as
    \begin{multline}
    \label{eq:B1}
        B_1=aZ+bZ^2+X(c\I+dZ+eZ^2)\\+X^2(f\I+gZ+hZ^2).
    \end{multline}
    When we substitute $B_0$ and $B_1$ in \eqref{eq:C-operators} we obtain the relations,
    \begin{equation}
        \label{eq:bell-params}
        \frac{-\alpha\beta}{\beta^2-\beta^*}=\kappa=\frac{-\gamma\delta}{\delta^2-\delta^*}\,,
    \end{equation}
    which, when rearranged, gives \eqref{eq:alpha-square-is-1}.
    For a more general construction, now consider the case where observables $A_x,B_x$ act on Hilbert spaces $\h_A$ and $\h_B$ with some general dimension $d$.
    Again using the trace of the relations \eqref{eq:C-operators}, we find that if one assumes the inequality \eqref{eq:inequality} then we can show that the observables $B_x$ are traceless. Combined with the fact that $B_x^3=\I_B$ and $B_x^\dagger B_x=\I$, tracelessness implies that the eigenvalues of the observables are $\{1,\omega,\omega^2\}$, and they occur with the same multiplicity. This in turn implies that $\h_B\rightarrow\C^3\otimes\h_{B'}$, in other words, the dimension of Bob's Hilbert space is a multiple of 3. As in the first case, we can again choose a local unitary $U'_B$ such that
    \begin{align}
        \tilde{B}_0 =& U'_B B_0 {U'_B}^\dagger = Z\otimes\I ,\\
        \tilde{B}_1 =& U'_BB_1{U'_B}^\dagger=\sum_{i,j=0}^{2}\op{i}{j}\otimes F_{ij}.
    \end{align}
    where $F_{ij}$'s are matrices that act on $\h_{B'}$. Using \eqref{eq:C-operators} and the fact that the operators $\tilde{B}_y$ are again projective and unitary, one can find another unitary matrix $V_B$ that while leaving $\tilde{B}_0$ unchanged, acts on $\tilde{B}_1$ to give
    \begin{equation}
        V_B\tilde{B}_1V_B^\dagger = T_3\otimes\I\,,
    \end{equation}
    where $T_3$ is defined in \eqref{eq:t3-def}. We can then define the unitary matrix in \eqref{eq:u_b} as $U_B = V_BU'_{B}$.
    
    Having characterised Bob's observables, let us now look at Alice's observables. Here we follow a procedure similar to \cite{sarkarCertificationIncompatibleMeasurements2022}, wherein we restrict Bob's Hilbert space to $\C^3$ on which $Z$ and $T_3$ act and use the fact that these observables 
    are \emph{genuinely incompatible}. Genuine incompatibility implies non-existence of any common invariant subspace and we show in \autoref{app:GI} that $Z$ and $T_3$ are in fact genuinely incompatible using Burnside's theorem on irreducible matrix algebras. It is a straightforward conclusion that $\tilde{C}_0^*=(\alpha Z + \beta T_3)^*$ and $\tilde{C}_1^*=(\gamma Z + \delta T_3)^*$ are also genuinely incompatible.
    
    Obtaining the maximally entangled state as the one that achieves the Tsirelson bound and the characterisation of Alice's observables is a simultaneous process. To see how, observe that while working in the restricted Hilbert space, the state $\ket\psi$, which by assumption violates the Bell inequality maximally, is reduced to a mixed state $\Tr_{B'}(\ket\psi)=\rho_{AB}\in\h_A\otimes\h_B$, where $\h_B= \C^3$. As the maximal violation of the Bell inequality implies that the state $\rho_{AB}$ satisfies the nullifying conditions,
    \begin{equation}
        L_i \rho_{AB} = 0,
    \end{equation}
    all the eigenvectors of the state $\ket{\phi_i}$ also satisfy them. Therefore, we can consider the eigenvectors individually and use the following two facts for a pure state written in Schmidt-decomposed form, $\ket\phi = \lambda_l \ket{e_l}\ket{f_l}$.

    \textbf{Fact}: We can always write $\ket\phi$ as
    \begin{equation}
        \ket{\phi} = (M_A\otimes P_B)\ket{\phi_+},
    \end{equation}
    where $\ket{\phi_+}$ is the maximally entangled state, $M_A$ is a unitary matrix such that $M_A\ket{e_l} = \ket{f_l}^*$ and $P_B = \sqrt{d}\sum_i\lambda_i\op{f_i}$.

    \textbf{Fact}: For any two matrices $R$ and $Q$ and the maximally entangled state $\ket{\phi_+}$, 
    \begin{equation}
        R\otimes Q\ment = RQ^T\otimes \I \ment.
    \end{equation}

    Using the two facts above and the genuine incompatibility of $\tilde{C}_0^*$ $\tilde{C}_1^*$, we can show that there exists a unitary $U_A$ such that under the action of this unitary Alice's observables are
    \begin{align}
        U_A A_0 U_A^\dagger = (\alpha Z + \beta T_3)^*\otimes\I_{A'BB'} ,\\
        U_A A_1 U_A^\dagger = (\gamma Z + \delta T_3)^*\otimes\I_{A'BB'} .
    \end{align}

We see that the unitaries act on the Hilbert space by splitting it in $\C^3\otimes\C^3\otimes\h_{A'B'}$. Therefore, it is not surprising to find that the action of these unitaries on the state $\ket\psi$ is
    \begin{equation}
        U_A\otimes U_B\ket\psi = \ment\otimes\ket{\psi'},
    \end{equation}
    where $\ket{\psi'}$ is some state acting on $\h_{A'}\otimes\h_{B'}$,
    which completes the proof.
\end{proof}

\section{Classical Value of the Bell Inequality}
\label{sec:classical}
The classical value of the Bell inequality can be found by characterising the vertices of the local polytope. All the vertices can be written in the following form,
\begin{multline}
\beta_C = 
    -\frac{2\sin(3\theta_\alpha)\cos({\theta_\alpha+s_\alpha\frac{2\pi}{3}})}{\sin{3(\theta_\alpha-\theta_\beta)}} \\+ \frac{2\sin({3\theta_\beta})\cos({\theta_\beta+s_\beta\frac{2\pi}{3}})}{\sin{3(\theta_\alpha-\theta_\beta)}} \\-
    \frac{2\sin(3\theta_\gamma)\cos({\theta_\gamma+s_\gamma\frac{2\pi}{3}})}{\sin{3(\theta_\gamma-\theta_\delta)}} \\+ \frac{2\sin({3\theta_\delta})\cos({\theta_\delta+s_\delta\frac{2\pi}{3}})}{\sin{3(\theta_\gamma-\theta_\delta)}},
\end{multline}
where $s_\alpha,s_\beta\dots$ take on values $\{0,1,-1\}$ for different vertices.
Now to constrain the above value, it would be valuable to collect all the constraints on the Bell parameters that enable self-testing of the Bell inequality. For some integer $n\in\mathbb{Z}$:
\begin{align}
    c_1 :&\quad \theta_\alpha-\theta_\beta =\theta_\gamma-\theta_\delta + n \pi\,, \\
    c_2 :&\quad \theta_\alpha-\theta_\beta\neq \frac{n\pi}{3}\,,\ \theta_\alpha,\theta_\beta,\cdots \neq  \frac{n\pi}{3}, \\
    c_3 :&\quad \abs{\alpha}^2+ \abs{\beta}^2+ \abs{\gamma}^2+ \abs{\delta}^2 =2 \,, \\
    c_4 :&\quad\frac{\alpha^2}{\alpha^*} + \frac{\beta^2}{\beta^*}  = \frac{\gamma^2}{\gamma^*} + \frac{\delta^2}{\delta^*} = 1. \\ 
    c_5 :&\quad \alpha\beta^*+\gamma\delta^*.
\end{align}
For simplicity, let us take $n=0$ in the first constraint. The constraint $c_1$ lets us write $\theta_\gamma = \theta_\alpha-\theta_\beta+\theta_\delta$. 
The constraint $c_2$ prevents the classical value from being infinite. 
From $c_4$ we can extract the expressions for the absolute values of the Bell parameters (see \eqref{eq:thetacondi1}).
Notice in $c_3$ that one of the possible solutions for the constraint is obtained by choosing $\theta_\alpha-\theta_\beta = \frac{\pi}{6}$. Using which $c_5$ reduces to 
\begin{multline}
    \sin(6 \beta)+\sin(6 \delta) = 0 \\\implies \theta_\delta= \theta_\beta \pm (2 n +1) \frac{\pi}{6}\,, \forall n \in \mathbb{Z}.
\end{multline}
The maximum classical value, under all of the above assumptions is a function only of $\theta_\beta$, and in the range $0<\theta_\beta<\pi/6$ is given by:
\begin{equation}
\label{eq:classical-value}
    \beta_C = \begin{cases}
        3 \cos (2 \beta )+\cos (4 \beta ), & 0<\theta_\beta\leq\frac{\pi}{12}, \\
        3 \sin \left(2 \beta +\frac{\pi }{6}\right)+\sin \left(4 \beta -\frac{\pi }{6}\right), & \frac{\pi}{12}\leq\theta_\beta<\frac{\pi}{6}.
    \end{cases}
\end{equation}
See \autoref{fig:classical-value} for the plot of above expressions. Note that for this plot, we only considered a partial solution to the constraint $c_3$. Only the interval $0\leq\theta_\beta\leq\pi/6$ is plotted in the figure, as the maximum classical value is periodic. The minimum in this interval coincides with the maximum classical value of the SATWAP inequality for $d=3$, which is $\beta_C=3.09808$ when $\theta_\beta={\pi}/{12}$. The SATWAP inequality can be obtained from our inequality by further assuming $\theta_\alpha+\theta_\beta=0$ \cite{sarkarSelftestingQuantumSystems2021}. It is easy to see that in the whole region the maximum classical value in \eqref{eq:classical-value} is less than the Tsirelson bound of 4. The endpoints of the interval are excluded as they violate the constraints provided by self-testing, namely $c_2$. 

\begin{figure}
    \centering
    \includegraphics[width=0.88\linewidth]{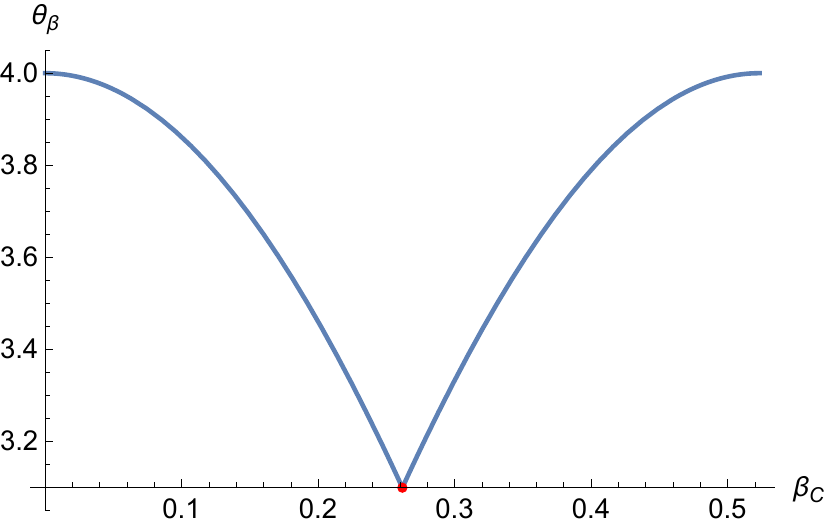}
    \caption{The plot depicts the maximum classical value under the constraints $\theta_\alpha = \theta_\beta +\pi/6$, $\theta_\gamma = \theta_\delta +\pi/6$ and $\theta_\delta = \theta_\beta +\pi/6$. The plot is for the interval $0\leq\theta_\beta\leq\pi/6$ as the maximum classical value is periodic in nature and the period is $\pi/6$. The red dot denotes the maximum classical value of the SATWAP inequality for $d=3$. }
    \label{fig:classical-value}
\end{figure}

\section{Conclusions}

In this article, we derived a self-testing statement for the Bell scenario for bipartite systems of local dimension $d=3$ considering the minimal number of measurement settings and outcomes per observer.
We started with the most general Bell operator inspired by the CGLMP Bell inequality. The imposition of  sum-of-squares decomposition combined with the projective and unitary three-outcome observables led to simple constraints on the parameters in the Bell inequality for the state to maximally violate the inequality when it is nullified by the elements of the sum-of-squares decomposition. From here, we determined the maximal quantum value achievable by our Bell operator to be 4. 

We proved analytically that the Bell inequality self-tests the maximally entangled state and a class of measurement observables that are constrained by a nominal number of constraints on the Bell parameters.
We also showed that the resulting class of self-testing Bell inequalities is a broader class of inequalities than previously known.  We also demonstrated that the SATWAP inequalities are a special case of our inequality. The fact that the SATWAP inequality for $d=3$ coincides with the point where the difference between the maximum classical value and the maximum quantum value is greatest, implies an optimality in the derived class of Bell inequalities.

\section{Acknowledgments}
This work was supported by the QuantERA II Programme (VERIqTAS project), which has received funding from the European Union’s Horizon 2020 research and innovation programme under Grant Agreement No. 101017733, and from the Polish National Science Centre (Grant No. 2021/03/Y/ST2/00175). We also acknowledge funding from the European Union’s Horizon Europe research and innovation programme under Grant Agreement No. 101080086 (NeQST). S.S. acknowledges support from the National Science Centre, Poland, under Grant Opus 25 (No. 2023/49/B/ST2/02468).

\bibliography{references}

\appendix
\section{Theorem 1: Sum-of-Squares and Tsirelson bound}
\label{app:theorem1}

The aim of \autoref{thm:theorem1} is to prove that there exists a sum-of-squares decomposition for the Bell operator,
\begin{multline}
    W=A_0\otimes \left(\alpha B_0+\beta B_1\right)+A_1\otimes \left(\gamma B_0+\delta B_1\right)\\
    +A_0^{\dagger}\otimes \left(\alpha B_0+\beta B_1\right)^{\dagger}+A_1^{\dagger}\otimes \left(\gamma B_0+\delta B_1\right)^{\dagger},
\end{multline}
and to find the conditions on the Bell parameters for when the SoS decomposition is valid.
Consider the sum-of-squares of $L_i$s defined in \eqref{eq:l1l2},
\begin{multline}
\label{eq:sos-app}
\frac{1}{2}\left(L_1^\dagger L_1 + L_1 L_1^\dagger+ L_2^\dagger L_2  + L_2 L_2^\dagger \right) \\= 2 \I - W + \frac{1}{2}\left(S_1 + S_2\right)\,,
\end{multline}
where
\begin{align}
    L_1 &= \I - A_0 \otimes (\alpha B_0 + \beta B_1)\,,\\
    L_2 &= \I - A_1 \otimes (\gamma B_0 + \delta B_1)\,,\\
    S_1 &= \I \otimes (\alpha B_0 + \beta B_1)^\dagger (\alpha B_0 + \beta B_1) \nonumber\\
    &+ \I \otimes (\alpha B_0 + \beta B_1) (\alpha B_0 + \beta B_1)^\dagger\,, \mbox{ and}\\
    S_2 &= \I \otimes (\gamma B_0 + \delta B_1)^\dagger  (\gamma B_0 + \delta B_1)\\
     &+ \I \otimes (\gamma B_0 + \delta B_1)  (\gamma B_0 + \delta B_1)^\dagger.
\end{align}
When we expand the products in $S_1$ and $S_2$, equation \eqref{eq:sos-app} becomes
\begin{multline}
\label{eq:sos2}
   \frac{1}{2}\left(L_1^\dagger L_1 + L_1 L_1^\dagger  + L_2^\dagger L_2 + L_2 L_2^\dagger \right)\\= (2 + \abs{\alpha}^2+ \abs{\beta}^2+ \abs{\gamma}^2+ \abs{\delta}^2) \I - W \\
   +\frac{\I}{2}\otimes (\alpha^*\beta + \gamma^*\delta)(B_0^\dagger B_1+B_1B_0^\dagger )\\+\frac{\I}{2}\otimes (\alpha\beta^* + \gamma\delta^*)(B_1^\dagger B_0+B_0 B_1^\dagger )\,.
\end{multline}
The cross terms in the above sum-of-squares vanish if we put 
\begin{equation}
\label{eq:ab-gd-sum-zero-app}
    (\alpha\beta^* + \gamma\delta^*) = (\alpha^*\beta + \gamma^*\delta)=0.
\end{equation}
Now, for the proof of the corollary, if we assume that a state $\ket\psi$ maximally violates the Bell inequality, then the following relation holds, whence $L_i$s are called the nullifiers of the state.
\begin{equation}
\label{eq:nullifiers-app}
    L_i\ket\psi = L_i^\dagger\ket\psi = 0 \qfor  i=1,2.
\end{equation}
The expectation value of the sum-of-squares,
\begin{multline}
\label{eq:ev-sos}
    \ev{\sum_{i=1}^2 (L_i^\dagger L_i+L_i L_i^\dagger)}{\psi}  \\= T_Q \ip{\psi} -\ev{W}{\psi} = 0.
\end{multline}
immediately gives us the Tsirelson bound or the quantum bound for this inequality as $T_Q = (2 + \abs{\alpha}^2+ \abs{\beta}^2+ \abs{\gamma}^2+ \abs{\delta}^2)$. 

For the second part of the corollary, let us examine \eqref{eq:nullifiers-app} a little more closely. $L_1\ket\psi = 0$ can be rewritten as
\begin{align}
    \label{eq:nullA0}
    A_0 \otimes (\alpha B_0 + \beta B_1) \ket\psi &= \ket\psi,\\
    \label{eq:b3isI}
    \I \otimes (\alpha B_0 + \beta B_1)^3 \ket\psi &= \ket\psi, \\
    \I \otimes (\alpha B_0 + \beta B_1) \ket\psi &= A_0^\dagger \otimes \I \ket\psi,
\end{align}
where the second equation is obtained from the first by multiplying by $L_1$ two times and the third equation is obtained by multiplying the first  by $A_0^\dagger$.
In addition, multiplying the third by $(\alpha B_0 + \beta B_1)^\dagger$ and using $L_1^\dagger\ket\psi=\ket\psi$.
\begin{multline}
    \I \otimes (\alpha B_0 + \beta B_1)^\dagger(\alpha B_0 + \beta B_1)\ket\psi 
    =\ket\psi.
\end{multline}

 Considering the action of these operators only on the system $B$, we have the following relations.
 \begin{subequations}
\label{eq:B-conditions}
 \begin{align}
 \label{eq:AB-B3=i}
 (\alpha B_0 +\beta B_1)^3 &= \I\,,\\
\label{eq:AB-B2=Bct}
 (\alpha B_0 +\beta B_1)^2 &= (\alpha B_0 +\beta B_1)^\dagger,\\
 \label{eq:AB-Bunitary}
 (\alpha B_0 +\beta B_1)^\dagger(\alpha B_0 +\beta B_1) &= \I\,.
 \end{align}
 \end{subequations}
 
Note here that similar relations will be obtained from the second nullifier $ L_2 = \I - A_1 \otimes (\gamma B_0 + \delta B_1)$, in terms of $\gamma$ and $\delta$.
\begin{subequations}
\label{eq:B-conditions-2}
\begin{align}
 (\gamma B_0 +\delta B_1)^3 &= \I\,,\\
 (\gamma B_0 +\delta B_1)^2 &= (\gamma B_0 +\delta B_1)^\dagger,\\
 \label{eq:GD-Bunitary}
 (\gamma B_0 +\delta B_1)^\dagger(\gamma B_0 +\delta B_1) &= \I\,.
\end{align}
\end{subequations}
These are the operators $C_0$ and $C_1$ in \eqref{eq:C-operators}, defined respectively as $\alpha B_0 +\beta B_1$ and $\gamma B_0 +\delta B_1$.
Now we are in a position to say something about the quantum bound. Expanding equations \eqref{eq:AB-Bunitary} and \eqref{eq:GD-Bunitary}:
\begin{align}
\label{eq:ab}
\left(\abs{\alpha}^2+\abs{\beta}^2-1\right)\I+\alpha\beta^*B_0B_1^{\dagger}+\alpha^*\beta B_1B_0^{\dagger}&=0, \\
\label{eq:gd}
\left(\abs{\gamma}^2+\abs{\delta}^2-1\right)\I+\gamma\delta^*B_0B_1^{\dagger}+\gamma^*\delta B_1B_0^{\dagger}&=0.
\end{align}
Adding the above two equations and using \eqref{eq:ab-gd-sum-zero-app} gives us the needed expression:
\begin{align}
    \left(\abs{\alpha}^2+\abs{\beta}^2 + \abs{\gamma}^2+\abs{\delta}^2-2\right) \I = 0, \\
    \implies \abs{\alpha}^2+ \abs{\beta}^2+ \abs{\gamma}^2+ \abs{\delta}^2 = 2.
\end{align}
Substituting the value of this expression in $T_Q$ from \eqref{eq:ev-sos} makes the quantum bound of the Bell operator $T_Q=4$, which completes the proof.

\section{Theorem 2: Self-testing statement}
\label{app:theorem2}
Now to characterise $B_0$ and $B_1$ as valid $3$-outcome measurements that satisfy the relations derived in the previous section, we consider two cases. First is the case where $B_0$ and $B_1$ act on three dimensional Hilbert space $\C_3$, and second, where the observables are acting on some arbitrary dimensional Hilbert space $\h_B$. Here, the utility of the restriction in the first case is only to provide us with a relation among the Bell parameters that is still valid in the second case.

\subsection{Three dimensional Hilbert space}
To start with the former, we use the fact that in the case of finite-dimensional Hilbert spaces, when $B_0$ and $B_1$ satisfy conditions \eqref{eq:operator-conditions}, without loss of generality, we can assume one of the operators, say $B_0=Z$, and then $B_1$ can be written in the general form
\begin{multline}
\label{eq:B1}
    B_1=aZ+bZ^2+X(c\I+dZ+eZ^2)\\+X^2(f\I+gZ+hZ^2),
\end{multline}
in terms of the elements of the three dimensional Heisenberg-Weyl basis. The matrices $Z$ and $X$ are also called the clock and shift matrices and are defined as $Z=\sum_{i=0}^2\omega^i\ket{i}\bra{i}$ and $X=\sum_{i=0}^2\ket{i}\bra{i+1}$. 
In recent works, \eg \cite{custom-bell-inequalities, Kaniewski2019maximalnonlocality, salavrakosBellInequalitiesTailored2017}, it is noted that the analysis of such a Bell scenario is simpler when the observables are defined in the Heisenberg-Weyl basis.
When we impose on the operators $B_0$ and $B_1$, the need to satisfy the latter two out of the three relations in \eqref{eq:B-conditions}, \ie
\begin{multline}\label{eq:condi1}
\left(|\alpha|^2+|\beta|^2-1\right)\I+\alpha\beta^*B_0B_1^{\dagger}+\alpha^*\beta B_1B_0^{\dagger}=0,
\end{multline}
and
\begin{multline}\label{eq:condi2}
(\alpha^2-\alpha^*)B_0^{\dagger}+(\beta^2-\beta^*)B_1^{\dagger}+\alpha\beta\left(B_0B_1+B_1B_0\right)=0.
\end{multline}
By substituting $B_0$ and $B_1$, we can compare the coefficients of the elements of the Heisenberg-Weyl basis to obtain the following relations between the coefficients in \eqref{eq:B1}.
\begin{subequations}
\label{eq:B1-parameters}
\begin{align}
a&=\frac{\abs{\alpha}^2+\abs{\beta}^2-1}{3 \alpha^*\beta}+\frac{\alpha^2-\alpha^*}{3\alpha\beta},\quad b=0,\\
c^*&=\frac{\alpha\beta}{\beta^2-\beta^*}\omega^2 h=-\frac{\alpha^*\beta}{\alpha\beta^*}\omega^2 h,\\
g^*&=\frac{\alpha\beta}{\beta^2-\beta^*} d=-\frac{\alpha^*\beta}{\alpha\beta^*} d,\\
e^*&=\frac{\alpha\beta}{\beta^2-\beta^*}\omega f=-\frac{\alpha^*\beta}{\alpha\beta^*}\omega  f. 
\end{align}
\end{subequations}

On the other hand, by substituting $B_0$ and $B_1$ in equations \eqref{eq:B-conditions-2}, we get the above relations in terms of parameters $\gamma$ and $\delta$. The following set of equations summarises the relations between the four Bell parameters from the above analysis.

\begin{equation}
\label{eq:bell-params-app}
\frac{\alpha\beta}{\beta^2-\beta^*}=-\frac{\alpha^*\beta}{\alpha\beta^*}=-\frac{\gamma^*\delta}{\gamma\delta^*}=\frac{\gamma\delta}{\delta^2-\delta^*}.
\end{equation}

Rearranging the terms in the first and last equality in \eqref{eq:bell-params-app}, we get
\begin{align}
\label{eq:alpha-square-is-1-app}
    \frac{\alpha^2}{\alpha^*} + \frac{\beta^2}{\beta^*} = 1 = \frac{\gamma^2}{\gamma^*} + \frac{\delta^2}{\delta^*} .
\end{align}
We can obtain additional expressions equivalent to \eqref{eq:bell-params-app} that are useful in later sections:
\begin{equation}
\label{eq:bell-params-app-2}
     \frac{\alpha^2-\alpha^*}{\alpha\beta} = \frac{\alpha\beta}{\beta^2-\beta^*} =\frac{\gamma^2-\gamma^*}{\gamma\delta}=\frac{\gamma\delta}{\delta^2-\delta^*}.
\end{equation}

The parameters in the Bell operator are complex and therefore can be written in the polar form: $\alpha=|\alpha|e^{\iota\theta_\alpha}$, 
$\beta=|\beta|e^{\iota\theta_\beta}$,
$\gamma=|\gamma|e^{\iota\theta_\gamma}$ and $\delta=|\delta|e^{\iota\theta_\delta}$. 
We can use the polar forms of the parameters in \eqref{eq:alpha-square-is-1-app} and individually compare the real part and the imaginary part with $1$ and $0$ respectively.
\begin{align}
    \abs{\alpha}e^{\iota 3 \theta_\alpha} + \abs{\beta}e^{\iota 3 \theta_\beta} &= 1,\\
    \qq*{Re:} \abs{\alpha}\cos{3\theta_\alpha} + \abs{\beta}\cos{3 \theta_\beta} &= 1,\\
    \qq*{Im:}\abs{\alpha}\sin{3\theta_\alpha} + \abs{\beta}\sin{3 \theta_\beta} &= 0.
\end{align}
A similar pair of relations for $\gamma$ and $\delta$ can be obtained as well, and is used to solve for the values of $\abs{\alpha}$, $\abs{\beta}$, $\abs{\gamma}$ and $\abs{\delta}$.
\begin{align}\label{eq:thetacondi1}
\abs{\alpha}=-\frac{\sin{3\theta_\beta}}{\sin{3(\theta_\alpha-\theta_\beta)}},& \quad \abs{\beta}=\frac{\sin{3\theta_\alpha}}{\sin{3(\theta_\alpha-\theta_\beta)}},\nonumber\\
\abs{\gamma}=-\frac{\sin{3\theta_\delta}}{\sin{3(\theta_\gamma-\theta_\delta)}},& \quad \abs{\delta}=\frac{\sin{3\theta_\gamma}}{\sin{3(\theta_\gamma-\theta_\delta)}}.
\end{align}
The absolute values do not go to infinity and are also not equal to zero, otherwise the Bell inequality would be trivial. Therefore, we have to ensure that
\begin{equation}
    \theta_{\alpha}-\theta_\beta=\theta_\gamma-\theta_\delta \neq n\frac{\pi}{3}\quad\quad \forall n \in \mathbb{Z},
\end{equation}
and for all integers $n_i$
\begin{align}
    \theta_i \neq \frac{n_i\pi}{3}\qcomma i\in\{\alpha,\beta,\gamma,\delta\}.
\end{align}
By substituting the polar form in \eqref{eq:bell-params-app}  
we get
\begin{eqnarray}\label{eq:thetacondi2}
\theta_{\alpha}-\theta_\beta=\theta_\gamma-\theta_\delta+n\pi\quad\quad \forall n\in integers
\end{eqnarray}
A relation that will be useful later on can be derived from \eqref{eq:alpha-square-is-1-app}, by multiplying with its complex conjugate,
\begin{align}
    \label{eq:alpha-sq-beta-sq-1}
    1-\abs{\alpha}^2 -\abs{\beta}^2 &=\frac{(\alpha{\beta^*})^2}{\alpha^*\beta} +\frac{({\alpha^*}{\beta})^2}{\alpha\beta^*}.
\end{align}

\subsection{Tracelessness of Bob's observables}
In the last section the Bell operator were taken by design to be traceless. In the case of a Hilbert space with some general dimension $d$, if the observables are shown to be traceless with eigenvalues $\{1,\omega,\omega^2\}$, then the $d$-dimensional Hilbert space splits into $\mathcal{H}_1\otimes\mathcal{H}_2$, such that $\mathcal{H}_1$ has dimension $3$. 

For ease of notation, we use the substitution
\begin{equation}
    \kappa=\frac{\gamma\delta^*}{\gamma^*\delta}=\frac{\alpha\beta^*}{\alpha^*\beta} .
\end{equation}
Using the polar form tells us that the modulus of $\kappa$ is 1,
$$\kappa=e^{2\iota(\theta_\alpha-\theta_\beta)}=e^{2\iota(\theta_\gamma-\theta_\delta)}.$$

First, taking the trace of \eqref{eq:condi2}, then, considering also the trace of the equation obtained by multiplying \eqref{eq:condi1} by $B_0^{\dagger}$ and $B_1^{\dagger}$, we have the following set of three equations.
\begin{subequations}
\begin{multline}
\frac{(\alpha^2-\alpha^*)}{2\alpha\beta}\Tr(B_0^{\dagger})+\frac{(\beta^2-\beta^*)}{2\alpha\beta}\Tr(B_1^{\dagger})\\=-\Tr(B_0B_1),    
\end{multline}
\begin{multline}
\frac{(\abs{\alpha}^2+\abs{\beta}^2-1)}{\alpha^*\beta}\Tr(B_0^{\dagger})+{\kappa}\Tr(B_1^{\dagger})\\=- \Tr(B_1B_0),
\end{multline}
\begin{multline}
    \frac{1}{\kappa} \Tr(B_0^{\dagger}) + \frac{(\abs{\alpha}^2+\abs{\beta}^2-1)}{\alpha\beta^*}\Tr(B_1^{\dagger})\\=-\Tr(B_0B_1).
\end{multline}
\end{subequations}

We can eliminate the term $\Tr(B_0B_1)$ and simplify the expressions using \eqref{eq:bell-params-app}, \eqref{eq:bell-params-app-2} and \eqref{eq:alpha-sq-beta-sq-1}. 
\begin{align}
    \left(\frac{1}{2\kappa} + \kappa^2  \right)\Tr(B_0^{\dagger}) &=  \frac{3}{2}\kappa \Tr(B_1^{\dagger}), \\
    \frac{3}{2\kappa}  \Tr(B_0^{\dagger}) &= \left(\frac{\kappa}{2}  + \frac{1}{\kappa^2} \right)\Tr(B_1^{\dagger}) .
\end{align}

which result in,
\begin{multline}
    \frac{\left(\kappa^3 -1\right)^2}{\kappa^2} \Tr(B_1^\dagger) = 0, \\\qand* \frac{\left(\kappa^3 -1\right)^2}{\kappa^4} \Tr(B_0^\dagger) = 0.
\end{multline}
Using the definition of $\kappa$, we find that the observables are traceless when:
\begin{align}
    \exp{\iota 6(\theta_\alpha -\theta_\beta)} \neq 1 &\implies (\theta_\alpha -\theta_\beta) \neq \frac{ n \pi}{3} \forall n \in \mathbb{Z}, \\
    \exp{\iota 6(\theta_\gamma -\theta_\delta)} \neq 1 &\implies (\theta_\gamma -\theta_\delta) \neq \frac{ n \pi}{3} \forall n \in \mathbb{Z}.
\end{align}
Thus, to restrict the dimension of the observables to $3\times n$ where $n$ is any positive integer, the two conditions above have to be satisfied.

\subsection{Characterising Bob's observables}
Now that the dimensions of the Hilbert space are restricted to $3\times n$, using the unitary freedom, we can in general choose a unitary ${U'}_B$ so that $\tilde{B}_0={U'}_B B_0 {U'}_B^\dagger=Z\otimes\I$.
Then we can write $\tilde{B}_1={U'}_BB_1{U'}_B^\dagger=\sum_{i,j=0}^{2}\ket{i}\bra{j}\otimes F_{ij}$, where our task now is to determine the matrices $F_{ij}$. Substituting $\ti{B}_0$ and $\ti{B}_1$ in \eqref{eq:condi2},
\begin{multline}\label{eq:B1condi1}
(\alpha^2-\alpha^*)\sum_{i=0}^2\omega^{-i}\op{i}\otimes\I+(\beta^2-\beta^*)\sum_{i,j=0}^{2}\op{i}{j}\otimes F_{ji}^{\dagger}\\+\alpha\beta\sum_{i,j=0}^{2}(\omega^i+\omega^j)\op{i}{j}\otimes F_{ij}=0.
\end{multline}
Consider first the diagonal elements of the above matrix equation. To solve for $F_{ii}$, we set $j=i$ and disregard the 3-dimensional subsystem and divide by $\alpha\beta$. The diagonal elements and their conjugate transpose are as follows.
\begin{align}
\label{eq:B1condi2}
\frac{(\alpha^2-\alpha^*)}{\alpha\beta}\omega^{-i}\I+\frac{(\beta^2-\beta^*)}{\alpha\beta}F_{ii}^{\dagger}+2\omega^i F_{ii}=0,\\
\frac{(\alpha^2-\alpha^*)^*}{\alpha^*\beta^*}\omega^{i}\I+2\omega^{-i} F_{ii}^\dagger +\frac{(\beta^2-\beta^*)^*}{\alpha^*\beta^*}F_{ii}=0.
\end{align}
Using \eqref{eq:bell-params-app} and \eqref{eq:bell-params-app-2} we eliminate $F_{ii}^\dagger$ to get,
 \begin{eqnarray}\label{eq:Fii1}
F_{ii}=\left(\kappa^2 +\frac{2}{\kappa}\right)\frac{\omega^i}{3}\I.
\end{eqnarray}
Looking at the off-diagonal case, where $i\neq j$, from \eqref{eq:B1condi1}
\begin{align}
-(\beta^2-\beta^*)F_{ji}^{\dagger}&=\alpha\beta(\omega^i+\omega^j)F_{ij}=-\alpha\beta\omega^{-(i+j)}F_{ij} \nonumber\\
\label{eq:Fijcondi1}
\implies&  -\kappa F_{ji}^{\dagger}=\omega^{-(i+j)}F_{ij},
\end{align}
where we use that fact that for the cube root of unity $\omega = \exp(\iota 2 \pi/3)$, 
\begin{equation}
    \omega^i+\omega^j = \omega^{-(i+j)},\ \mbox{for } i,j=0,1,2.
\end{equation}

Since $\tilde{B}_0$ and $\tilde{B}_1$ are unitary observables, they satisfy $\tilde{B}_1^{\dagger}\tilde{B}_1=\I$ and $\tilde{B}_1^2=\tilde{B}_1^{\dagger}$ which sets the following constraints on the matrices $F_{ij}$
\begin{align}\label{eq:Unicondi1}
\sum_{l=0}^2F_{il}F_{jl}^{\dagger}&=\I\delta_{ij}, \\
\label{eq:Unicondi2}
\qand*\sum_{l=0}^2F_{il}F_{lj}&=F_{ji}^{\dagger}.
\end{align}
Setting $i,j=0$ in \eqref{eq:Unicondi1} and \eqref{eq:Unicondi2} and rearranging,
\begin{align}
F_{01}F_{01}^{\dagger}+F_{02}F_{02}^{\dagger}&=\I -F_{00}F_{00}^{\dagger}, \\
F_{01}F_{10}+F_{02}F_{20}&=F_{00}^{\dagger}-F_{00}F_{00}.
\end{align}
Substituting the expressions for $F_{00}$ and $F_{00}^\dagger$ and using \eqref{eq:Fijcondi1}, we can eliminate $F_{02}F_{02}^\dagger$ and then $F_{01}F_{01}^\dagger$, 

\begin{multline}
   F_{01}F_{01}^\dagger = F_{02}F_{02}^\dagger = \frac{1}{9} \left(2- \left(\kappa^3+\frac{1}{\kappa^3} \right)\right) \I \\= \frac{2}{9}\left( 1-\cos{6(\theta_\alpha-\theta_\beta)}\right)\I.
\end{multline}

Similar relations for the matrices $F_{ij}$ with different values of $i,j$ can be worked out but it is not strictly necessary. We can once again choose an appropriate unitary matrix $V_B = U_0\oplus U_1\oplus U_2$, such that $U_i$s are unitary and it leaves $\tilde{B}_0$ invariant, \ie $V_B\tilde{B}_0V_B^\dagger=\tilde{B}_0$. If we take 
\begin{eqnarray}
U_0=c_0\I_{B'},\quad U_1= c_1F_{01},\quad U_2=c_2F_{02},
\end{eqnarray}
where, $c_i$s are some complex numbers, then by unitarity and because $V_B$ is a block diagonal matrix, we get the following constraints 
\begin{align}
    U_0U_0^\dagger &= \abs{c_0}^2\I = \I, \\
    U_1U_1^\dagger &= \abs{c_1}^2F_{01}F_{01}^\dagger = \I, \qand\\
    U_1U_1^\dagger &= \abs{c_2}^2F_{02}F_{02}^\dagger = \I.
\end{align}
which leads to the absolute values,
\begin{equation}
\label{eq:abs-values-cis}
    \abs{c_0}^2 = 1\qcomma\abs{c_1}^2 = \abs{c_2}^2  = \frac{9}{ \left(2- \left(\kappa^3+\frac{1}{\kappa^3} \right)\right)}.
\end{equation}

Under this unitary action we define $\ti{B}_1\gets V_B\ti{B}_1V_B^\dagger$, so that the constituent block matrices are $\ti{F}_{ij} = U_iF_{ij}U_j^\dagger$.
It follows immediately that $\tilde{F}_{ii}=F_{ii}$. 
Thus, for the diagonal elements, $\tilde{F}_{ii} = F_{ii}$. The off-diagonal elements are determined below using \eqref{eq:Fijcondi1}, \eqref{eq:Unicondi1} and \eqref{eq:Unicondi2}.
\begin{subequations}
\begin{align}
    \tilde{F}_{01} =& U_0F_{01}U_1^\dagger  = c_0c_1^*F_{01}F_{01}^\dagger = \frac{c_0c_1^*}{\abs{c_1}^2}\I,\\
    \tilde{F}_{10} =& U_1F_{10}U_0^\dagger  = -c_0^*c_1F_{01}F_{10} = -\kappa\omega\frac{c_0^*c_1}{\abs{c_1}^2}\I,\\
    \tilde{F}_{02} =& U_0F_{02}U_2^\dagger  = c_0c_2^*F_{02}F_{02}^\dagger = \frac{c_0c_2^*}{\abs{c_2}^2}\I,\\
    \tilde{F}_{20} =& U_2F_{20}U_0^\dagger  = -c_2F_{02}F_{20} = -\kappa\omega^2\frac{c_0^*c_2}{\abs{c_2}^2}\I.
\end{align}
\end{subequations}
To obtain $\tilde{F}_{12}$ and $\tilde{F}_{21}$, respectively put $i=1, j=0$ and $i=2, j=0$ in \eqref{eq:Unicondi1}, then multiply the first from left by $F_{01}$ and the latter by $F_{02}$. Solving for the requisite terms in both gives us
\begin{subequations}
\begin{multline}
    \tilde{F}_{12} = U_1F_{12}U_2^\dagger = c_1c_2^*F_{01}F_{12}F_{02}^\dagger \\= \frac{\omega}{3}\frac{c_1c_2^*}{\abs{c_2}^2}\left( \kappa^2-\frac{1}{\kappa}\right)\I,
\end{multline}
\begin{multline}
    \tilde{F}_{21} = U_2F_{21}U_1^\dagger = c_1^*c_2F_{02}F_{21}F_{01}^\dagger \\= \frac{\omega^2}{3}\frac{c_1^*c_2}{\abs{c_1}^2}\left( \kappa^2-\frac{1}{\kappa}\right)\I.
\end{multline}
\end{subequations}
As the matrices $\ti{F}_{ij}$ are all proportional to identity, we find that $\ti B_1=T_3\otimes\I_{B'}$, where
\begin{multline}
T_3 = \\ \begin{pmatrix}
    \frac{1}{3}\left( \kappa^2 +\frac{2}{\kappa}\right) & \frac{c_0}{{c_1}} & \frac{c_0}{{c_2}}\\
    -\kappa\omega\frac{c_0^*}{{c_1}^*} & \frac{\omega}{3}\left( \kappa^2 +\frac{2}{\kappa}\right) & \frac{\omega c_1}{3{c_2}}\left( \kappa^2-\frac{1}{\kappa}\right) \\
    -\kappa\omega^2\frac{c_0^*}{{c_2}^*} & \frac{\omega^2c_2}{3{c_1}}\left( \kappa^2-\frac{1}{\kappa}\right) & \frac{\omega^2}{3}\left( \kappa^2 +\frac{2}{\kappa}\right)
\end{pmatrix}.
\end{multline}

Observe that the denominator of the absolute values of $c_1$ and $c_2$ in \eqref{eq:abs-values-cis} can be expressed as a square, 
\begin{multline}
     \left(2- \left(\kappa^3+\frac{1}{\kappa^3} \right)\right) = - \left(\kappa^{3/2}-\frac{1}{\kappa^{3/2}}\right)^2 \\ = -(2 \iota \sin{3\theta_\kappa})^2,
\end{multline}
and therefore we can assume, without loss of generality,
\begin{equation}
c_1 = \frac{3 x}{\left( \kappa^{3/2} - \frac{1}{\kappa^{3/2}}\right)}\qand c_2 = \frac{3 x \omega^n}{\left( \kappa^{3/2} - \frac{1}{\kappa^{3/2}}\right)},
\end{equation}
where $x\in\mathbb{C}$ such that $\abs{x}^2=1$ and $n\in\mathbb{Z}$. Furthermore, if we take $x= \kappa^{-1/2}$, then $c_1 = 3 (\kappa^2 -\kappa^{-1})^{-1}$ and similarly for $c_2$.
The matrix $T_3$ becomes,
\begin{multline}
\label{eq:t3-def-app}
    T_3 = \frac{1}{3}\left[\left( \kappa^2+\frac{2}{\kappa} \right) Z + \left( \kappa^2-\frac{1}{\kappa} \right) ZX \right.\\
    \left.+ \omega\left( \kappa^2-\frac{1}{\kappa} \right) ZX^2\right].
\end{multline} 

Therefore, the observables acting on Bob's Hilbert space are $\ti B_0=Z_B\otimes\I_{B'}$ and $\ti B_1=T_B\otimes\I_{B'}$ under the unitary $U_B = V_B{U'}_B$. Now, because of the tensor product structure, the two observables  are not \emph{genuinely incompatible}, meaning that they share some common invariant subspace. On the other hand, if we restrict ourselves to the three dimensional subspace of $Z$ and $T_3$, we can employ Burnside's theorem for matrix algebras and show that they generate the same irreducible matrix algebra $\mathcal{A}$  generated by the matrices $Z$ and $X$, \ie $\mathcal{A}(\ti B_0, \ti B_1) = \mathcal{A}(Z, X)$.
This proof can be seen in \autoref{app:GI}.

Furthermore, one could also prove that the observables $\tilde{C}_0 = \alpha \ti B_0 + \beta \ti B_1$ and $\tilde{C}_1 = \gamma \ti B_0 + \delta \ti B_1$ are genuinely incompatible by using the relation $\alpha\beta^* +\gamma\delta^*=0$. With this information in hand, we can move on to Alice's observables.

\subsection{Alice's observables and the State}
In this section, we will show that there exists a unitary $U_A$ that plays the same role as $U_B$ of splitting the local Hilbert space and in the process we also show that the state in question that violates the Bell inequality maximally is indeed the maximally entangled state.

Let's start from the nullifiers, $L_1$ and $L_2$ from the SoS decomposition in \eqref{eq:nullifiers} with $\ti B_0 = Z\otimes\I$ and $\ti B_1 = T\otimes\I$,
\begin{align}
    A_0 \otimes (\alpha Z + \beta T)\otimes\I \ket\psi &= \ket\psi,\\
    A_1 \otimes (\gamma Z + \delta T)\otimes\I \ket\psi &= \ket\psi.
\end{align}
If we restrict the Bob's Hilbert space to the 3 dimensional subspace on which $Z$ and $T$ act, then the relations above are  satisfied by the state $\rho_{AB} = \Tr_{B'}(\op{\psi})$,
\begin{align}
    A_0 \otimes (\alpha Z + \beta T) \rho_{AB} &= \rho_{AB},\\
    A_1 \otimes (\gamma Z + \delta T) \rho_{AB} &= \rho_{AB}.
\end{align}
We can then use the eigen-decomposition of $\rho_{AB}$ to see that for each of it's eigenvectors, $\ket\phi_i$, the above equations still hold, \ie
\begin{align}
    A_0 \otimes (\alpha Z + \beta T) \ket\phi_i &= \ket\phi_i,\\
    A_1 \otimes (\gamma Z + \delta T) \ket\phi_i &= \ket\phi_i.
\end{align}
Multiplying both sides by the hermitian conjugate of observable acting on the Bob's Hilbert space,
\begin{align}
\label{eq:a0}
A_0 \otimes \I \ket\phi_i &= \I \otimes (\alpha Z + \beta T)^\dagger\ket\phi_i, \\
\label{eq:a1}
A_1 \otimes \I \ket\phi_i &= \I \otimes (\gamma Z + \delta T)^\dagger\ket\phi_i.
\end{align}

Now we use the fact that for a pure state with the Schmidt decomposition $\ket\phi_i = \sum_l \lambda_l \ket{e_l}\ket{f_l}$, one can always write
\begin{equation}
    \ket\phi_i = (U_A\otimes P_B) \ment
\end{equation}
where $U_A$ is a unitary that satisfies $U_A\ket{e_i} = \ket{f_i}^*$ for all $\ket{e_i}$ and $P_B = \sqrt{d}\sum_i\lambda_i\op{f_i}$ is a positive matrix. Let us substitute this in \eqref{eq:a0} with $\ti C_0=(\alpha Z + \beta T)$ and $\ti C_1=(\gamma Z + \delta T)$,
\begin{align}
    A_0 U_A\otimes P_B \ment &=  U_A \otimes \tilde{C}_0^\dagger P_B \ment, \\
    A_1 U_A\otimes P_B \ment &=  U_A \otimes \tilde{C}_1^\dagger P_B \ment.
\end{align}
Now using the fact that for any two matrices $R$ and $Q$ acting on the maximally entangled state,  $$R\otimes Q\ment = RQ^T\otimes \I \ment,$$
so we can move all the observables acting on Bob's Hilbert space to Alice's space:
\begin{align}
\label{eq:b1conj}
    \tilde{A}_0 P_B\otimes \I \ment &=  P_B \tilde{C}_0^*\otimes \I  \ment, \\
    \tilde{A}_1 P_B\otimes \I \ment &=  P_B \tilde{C}_1^* \otimes \I \ment.
\end{align}
Here $\tilde{A}_x = U_A^\dagger A_x U_A$. We can now compare the observables acting on the local Hilbert space of Alice and write the following equality
\begin{align}
\label{eq:a0-and-a1-app}
    \tilde{A}_0 P_B =  P_B \tilde{C}_0^* \qand 
    \tilde{A}_1 P_B =  P_B \tilde{C}_1^* .
\end{align}
Multiplying by their hermitian conjugate from the left leaves us with 
\begin{align}
    P_B \tilde{A}_0^\dagger\tilde{A}_0 P_B =P_B^2=  \tilde{C}_0^T P_B^2 \tilde{C}_0^* ,\\
    P_B \tilde{A}_1^\dagger\tilde{A}_1 P_B =P_B^2=  \tilde{C}_1^T P_B^2 \tilde{C}_1^* .
\end{align}

Rearranging the above equations to
\begin{align}
    \tilde{C}_0^*P_B^2=   P_B^2 \tilde{C}_0^* \qand
   \tilde{C}_1^* P_B^2=   P_B^2 \tilde{C}_1^* .
\end{align}
gives us the commutation relations $\comm{\tilde{C}_0^*}{P_B^2}=0$ and $\comm{\tilde{C}_1^*}{P_B^2}=0$. Due to the fact that $P_B$ is a positive matrix, the commutation relations $\comm{\ti{C}_1^*}{P_B}=0$ and $\comm{\ti{C}_0^*}{P_B}=0$ also hold.
It was shown in \cite{sarkarCertificationIncompatibleMeasurements2022} that for such a positive matrix $P_B$ and a set of genuinely incompatible observables $\{\ti{C}_i\}$, if $\comm{\ti{C}_i}{P_B}=0$, then the matrix $P_B$ is of the form $\lambda \I$ for some constant $\lambda$.
Therefore, in our case, where we have already shown that $\ti{C}_0^*$ and $\ti{C}_1^*$ are genuinely incompatible, then $P_B$ is of proportional to identity and we can rewrite \eqref{eq:a0-and-a1-app},

\begin{align}
\tilde{A}_0 = \tilde{C}_0^* \qand \tilde{A}_1 = \tilde{C}_1^*.
\end{align}
Observe that the unitary $U_A$ like $U_B$ splits the Hilbert space on Alice's side as $\mathbb{C}^3 \otimes \mathcal{H}^A_d$ 
where  $\mathcal{H}^A_{d'}$ is $d'$-dimensional Hilbert space. The unitaries $U_A$ and $U_B$ combined have the effect of splitting the whole Hilbert space as $\mathbb{C}^A_3\otimes\mathbb{C}^B_3 \otimes \mathcal{H}^A_{d'} \otimes \mathcal{H}^B_{d'}$. In the process, we have also shown, that the maximally entangled state satisfies the relations, as under the unitaries we have $U_A\otimes U_B\ket\psi = \ment\otimes\ket{\psi'}$, so that
\begin{equation}
   U_{AB} L_iU_{AB}^\dagger\ment\otimes\ket{\psi'} = 0, \mbox{ for } i=1,2,
\end{equation}
where we have used $U_{AB} = U_A\otimes U_B$ and $\ket{\psi'}$ is some state in $\mathcal{H}^A_{d'} \otimes \mathcal{H}^B_{d'}$. 
Therefore, $\ment$ violates the Bell inequality maximally. 

\section{Genuine Incompatibility of two observables}
\label{app:GI}
Given two matrices, $A$ and $B$, they are said to be Genuinely incompatible, if and only if they do not share any common invariant subspace. As mentioned earlier, we can use the Burnside's theorem on matrix algebras to show that the matrix algebra generated by $A$ and $B$ is irreducible.  In this case we are concerned with the matrices $B_0 = Z$ and $B_1=T$ (defined in \eqref{eq:t3-def}), where $T$ is a linear combination of some of the elements of the set of Weyl-Heisenberg operators which is generated by the matrix algebra $\mathcal{A}(Z,X)$.

The algebra, $\mathcal{A}(Z,X)$, is an irreducible algebra \ie an algebra of all $3\times3$ complex matrices, containing $9$ linearly independent elements.
Therefore, one can show that the algebra $\mathcal{A}(B_0,B_1)$ is irreducible, if it is the same as $\mathcal{A}(Z,X)$. 
\begin{lemma}
    The matrix algebra $\mathcal{A}(B_0,B_1)$ is irreducible.
\end{lemma}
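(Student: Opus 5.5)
We will show that the algebra $\mathcal{A}(B_0,B_1)$ generated by $B_0=Z$ and $B_1=T_3$ contains $X$. Since $\mathcal{A}(Z,X)$ is the full matrix algebra $M_3(\C)$, this gives $\mathcal{A}(B_0,B_1)=M_3(\C)$. Burnside's theorem then says that the two matrices have no common nontrivial invariant subspace.

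The first step uses only powers of $Z$. Because $Z^3=\I$, the algebra contains $\I$, $Z$ and $Z^2$, and hence the three diagonal projectors $\op{i}=\frac{1}{3}\sum_{k}\omega^{-ik}Z^k$.

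The second step extracts matrix units. Compress $T_3$ between these projectors: $\op{i}\,T_3\,\op{j}=(T_3)_{ij}\op{i}{j}$, and this lies in $\mathcal{A}(B_0,B_1)$ for every $i,j$. So the algebra contains every matrix unit $\op{i}{j}$ with $(T_3)_{ij}\neq0$.

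The third step reads off those entries from the explicit form of $T_3$ in \eqref{eq:t3-def}. The $ZX$ and $ZX^2$ terms give the off-diagonal entries, each a root of unity times $\frac{1}{3}\left(\kappa^2-\frac{1}{\kappa}\right)$. By the constraint \eqref{eq:inequality} these are all nonzero. Hence all six off-diagonal units $\op{i}{j}$, $i\neq j$, belong to the algebra, and together with the diagonal projectors they span $M_3(\C)$. In particular $X=\sum_i\op{i}{i+1}$ is in the algebra, as required.

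There is one point to check, and it is the only real obstacle. We must make sure that the cyclic pattern of $ZX$ and $ZX^2$ really fills all six off-diagonal positions with nonzero coefficients. In particular, no entry may cancel between the two terms, since $ZX$ and $ZX^2$ have disjoint supports. We must also confirm that the nonvanishing assumption $\kappa^3\neq1$, i.e.\ $(\theta_\alpha-\theta_\beta)\neq n\pi/3$, is exactly what is required. Strictly, connectivity would already suffice, since products of matrix units generate the rest: the units $\op{0}{1}$ and $\op{1}{2}$ yield $\op{0}{2}$, and so on.

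Finally, $\ti C_0=\alpha Z+\beta T_3$ and $\ti C_1=\gamma Z+\delta T_3$ are invertible linear combinations of $Z$ and $T_3$, because $\alpha\delta-\beta\gamma\neq0$; this follows from $\alpha\beta^*+\gamma\delta^*=0$ with nonzero parameters. They therefore generate the same algebra. Complex conjugation preserves irreducibility, so $\ti C_0^*,\ti C_1^*$ are genuinely incompatible as well.
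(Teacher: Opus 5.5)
Your proof is correct and is essentially the paper's argument: both obtain the diagonal projectors $\frac{1}{3}\sum_k\omega^{-ik}Z^k$ from powers of $Z$ and then compress the off-diagonal part of $T_3$ between them to produce matrix units. The only difference is that the paper first isolates $X+\omega X^2=\frac{1}{b}Z^2(B_1-aB_0)$ and compresses its square, whereas you compress $T_3$ directly. Your version also makes two things explicit that the paper leaves implicit: the argument needs $\kappa^2-\frac{1}{\kappa}\neq0$, i.e.\ constraint \eqref{eq:inequality}, which the paper uses only when dividing by $b$; and $\alpha\delta-\beta\gamma\neq0$ is what carries irreducibility over to $\tilde C_0^*,\tilde C_1^*$.
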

\begin{proof}
It is enough to show that one can recover the matrix $X$ from $\mathcal{A}(B_0,B_1)$.
Let $B_0 = Z$ and to simplify things let's write $B_1=aZ+bZX+\omega b ZX^2$. Now, $B_0$ and $B_1$ are in the algebra by definition. We also have $\I = B_0^3$ as an element. Then, by using elementary operations the following matrices are also in the algebra.
\begin{align}
    M_1 &=\frac{1}{b}Z^2( B_1-aB_0) =  X+ \omega X^2, \\
    M_2 &= B_0^2 = Z^2, \\
    M_3 &= \frac{1}{3}(\I + Z + Z^2) = \mbox{diag}(1,0,0),
    \end{align}
    \begin{align}
    M_4 &= \frac{1}{3}(\I+\frac{Z}{\omega} + \frac{Z^2}{\omega^2})=\mbox{diag}(0,1,0), \\
    M_5 &= \frac{1}{3}(\I+\frac{Z}{\omega^2} + \frac{Z^2}{\omega})=\mbox{diag}(0,0,1), \\
    M_6 & = M_4 M_1^2 M_5 +\frac{M_3M_1^2M_4}{\omega} +  \frac{M_5 M_1^2 M_3}{\omega^2}  = X \,.
 \end{align}
 Now that we have recovered $X$, together with $B_0=Z$ we can generate the complete algebra.
 \end{proof}

\end{document}